\newcommand{\Prob}{\mathbb{P}}
\begin{document}

\newtheorem{theorem}{Theorem}[section]
\newtheorem{quest}{Question}
\newtheorem{corollary}{Corollary}[section]

\theoremstyle{remark}
\newtheorem{remark}{Remark}[section]
\newtheorem{example}{Example}[section]

\title{Equivalence of Serial and Parallel A-Posteriori Probability Decoding in Digital Audio Broadcasting}
\author{Andrea Di Giusto, Wim van Houtum~\IEEEmembership{Senior Member,~IEEE}%
, Alberto Ravagnani,  Yan Wu
\thanks{A. Di Giusto is with NXP Semiconductors and Eindhoven University of Technology (email: a.di.giusto@tue.nl).}%
\thanks{W. J. van Houtum is with NXP Semiconductors and Eindhoven University of Technology (email: wim.van.houtum@nxp.com).}%
\thanks{A. Ravagnani is with Eindhoven University of Technology (email: a.ravagnani@tue.nl).}%
\thanks{Y. Wu is with NXP Semiconductors (email: yan.wu\textunderscore 2@nxp.com).}%
\thanks{A. Di Giusto and A. Ravagnani are supported by the European Commission through grant 101072316.}%
  }
\markboth{}%
{Di Giusto \MakeLowercase{\textit{et al.}}: Equivalence of Serial and Parallel APP Decoding in DAB Systems}

\maketitle
\begin{abstract}
Motivated by applications to digital audio broadcasting (DAB) systems, we study the a-posteriori probabilities (APPs) of the coded and information bits of the serial concatenation
  of multiple convolutional codewords.
The main result of this paper is a proof that the APPs of the input bits do not change when considering the concatenation of multiple codewords as a received sequence.
This is a general theoretical result, which remains valid for every convolutional code, as long as the encoder goes back to the zero state at the end of each codeword.
An equivalent approach for serial concatenation in Viterbi decoding is described.
The applicability of our result to DAB systems, where interleaving and modulation are accounted for, is investigated through Matlab simulations.
We show that the Bit Error Rate (BER) of the simulated DAB system does not change when decoding multiple transmitted codewords as one serially concatenated sequence, 
  even when considering all the features of a DAB system.
\end{abstract}

\begin{IEEEkeywords}
A-Posteriori Probability, Digital Audio Broadcasting, Serial and Parallel Decoding, Codeword Concatenation.
\end{IEEEkeywords}
\vspace{-0.5cm}
\section{Introduction}
\IEEEPARstart{D}{AB} systems often carry multiple convolutionally encoded data components (services) simultaneously.
For example, the standard specifications of~\cite{etsi300} include a Bit Interleaved Coded Modulation (BICM) part and an Orthogonal Frequency Division Multiplexing (OFDM) part,
    where each subcarrier is modulated by~$\pi/4$ differentially encoded quaternary phase shift keying (DE-QPSK).
In the BICM, multiple data streams are encoded separately, which are then aggregated in the Common Interleaved Frames (CIFs) in the Main Service Channel (MSC).
A CIF is then composed by multiple \textit{parallel} codewords and, at the receiver's end, it is usual to decode each of these codewords separately.
Iterative decoding techniques~\cite{hagenauer1996iterative} have been proposed to enhance the performance of DE-QPSK streams~\cite{peleg2000iterative}, and further analyzed in the context of OFDM systems 
    in~\cite{van2012two}.
In this framework, the DAB system is a concatenated code, with a convolutional outer code and the DE-QPSK modulation acting as an inner code.
To each parallel codeword correspond only few OFDM symbols; since the parallel received words are decoded independently, this affects the performance of the iterative part~\cite{van2012two}.
In this paper, we propose and study a solution to this issue: to decode the different parallel convolutional codewords as one long 
    \textit{serial} codeword, by concatenating them one after the other.
As a first step, we isolate the coding problem: we consider the APPs of the information and coded bits, and we prove that 
    they are unchanged if we consider the serial received word instead of the parallel words. 
The only assumptions required for this result are (1) that each parallel encoder goes back to the zero state at the end of the encoding, and (2) that the decoder knows
    the length of each codeword.
These assumptions are not restrictive, and in particular they are met by the specifications of the DAB standard~\cite{etsi300}; in practice, the APPs are computed using the 
    BCJR algorithm~\cite{bahl1974optimal}.
For receivers using Viterbi decoding~\cite{viterbi1967error}, we briefly describe a modification of the algorithm allowing for serial decoding.
To support the applicability of our ideas to practical DAB systems, we run Matlab simulations comparing the proposed serial decoding to the standard parallel procedure.
We compute the Log-Likelihood Ratios (LLR) of the information and coded bits, and find that they are unchanged in the serial decoding.
This aligns with our expectations, since the LLRs are computed directly from the APPs, and supports the applicability of serial decoding of multiple words to practical DAB systems.
We compute and show BER curves of the simulated systems.
The remainder of this work is organised as follows: in \Cref{sec:notation_problem} we introduce the necessary notation and formalize our coding problem, 
    in \Cref{sec:decoding_problem} we prove our main result on the serial decoding of multiple words, and in \Cref{sec:simulation} we comment on our simulation results.
\Cref{sec:conclusions} concludes the paper.

\section{Notation and problem statement}\label{sec:notation_problem}
\paragraph{Notation} let~$X=x_1\ldots x_n$ be a sequence of~$n$ elements of some set, and~$1\leq i\leq j\leq n$, then~$[X]_i^j=x_i\ldots x_j$ denotes the 
    subsequence of~$X$ from the~$i^{th}$ element to the~$j^{th}$, including the extremes.
If~$X$ and~$Y$ are sequences, then~$XY$ denotes their concatenation; hence an example of our notation is~$X=[X]_1^i[X]_{i+1}^n=x_1\ldots x_i x_{i+1}\ldots x_n$. 
The convolutional encoding of a sequence~$X$ with the mother code of~\cite{etsi300} (rate~$1/4$, memory~$6$, generated by the polynomials~$[133,171,145,133]_8$) is denoted by~$Y=convenc(X)$.
In trellis operations, time indexes states (starting from~$0$) and trellis sections index transitions 
  (starting from~$1$).
Hence trellis section~$1$ ($t$) refers to the transition between states at time~$0$ ($t-1$) and~$1$ ($t$).
~$\Prob(\cdot)$ denotes probability.
\paragraph{Problem statement}
we consider the CIF structure of~\cite[Section 13]{etsi300}, where one CIF is composed of~$N$ parallel subchannels.
For simplicity, we assume to each subchannel corresponds one \textit{parallel} convolutional codeword.
To improve the performance of the iterative decoding, we use the whole CIF as one codeword; we will call this the \textit{serial} approach.
Abstracting from practical DAB specifications, we consider the following coding setup.
For every~$i=1,\ldots,N$ let~$B_i$ be bit sequences of arbitrary length.
To encode these sequences with the rate 1/4 mother code of~\cite{etsi300}, six zero bits, denoted by~$0_6$, are appended at the end of each~$B_i$, yielding the~$N$ parallel 
    information sequences~$B_i0_6$ of length~$K_i=6+\textnormal{length of }B_i$.
For bookkeeping reasons, we let~$L_i=\sum_{i=1}^NK_i$ and~$L_0=0$.
Let~$X_i=convenc(B_i0_6)$ be the parallel codewords obtained encoding the information sequences, and let~$B$ and~$X$ denote the serial concatenations
\begin{equation*}
    B=B_10_6B_20_6\ldots B_N0_6\quad\text{and}\quad X=X_1X_2\ldots X_N\;.
\end{equation*}
It is convenient to think of the codewords~$X_i$ as sequences formed by 4-bit symbols, because as such they have length~$K_i$ and each bit in an information sequence 
    corresponds to a 4-bit symbol in the encoded sequence.
With this correspondence in mind, both~$B$ and~$X$ have length~$L_N$, and using our notation for sequences we have~$B_i=[B]_{L_{i-1}+1}^{L_i}$ and~$X_i=[X]_{L_{i-1}+1}^{L_i}$.
The single bits of~$B$ are denoted by~$b_t$, meaning~$B=b_1\ldots b_{L_N}$.
Suppose serial codeword~$X$ is sent over a channel, and an altered version~$Y$ of~$X$ is received and has to be decoded.
Depending on the decoder in use,~$Y$ can be either a sequence of hard or soft bit-metrics; as done with~$X$, we consider it as a sequence of 4-bit-metrics symbols,
    implying~$Y$ too has length~$L_N$.
Each 4 bit-metrics correspond to 4 encoded bits, which in turn correspond to 1 information bit~$b_t$ in~$B$.
Facing the task of recovering the information sequences~$B_i0_6$ from~$Y$, the parallel approach is to split up~$Y$ in~$N$ received parallel sequences~$Y_i=[Y]_{L_{i-1}+1}^{L_i}$, 
    each of which is then decoded independently.
This is a reasonable approach if the decoding procedure is only performed once; anyway, when performing iterative decoding in DAB systems, 
    information is exchanged back and forth between the outer codeword (the convolutionally encoded CIF) and the inner codeword (the DE-QPSK modulated subcarrier in OFDM).
Each subchannel in the CIF only corresponds to few symbols in the OFDM system, reducing the effectiveness of the information exchange in the iterations between inner and outer code.
For this reason, we investigate the following
\begin{quest}\label{question}
    Is it possible to decode~$Y$ without dividing it and still get the same information about the information sequences/performance of the coding scheme?
    How does this serial approach of the decoding of~$Y$ compare to the parallel approach?
\end{quest}
We will first study this problem from a theoretical standpoint, and later from a practical angle.
To familiarize with the notations and the comparison between serial and parallel approach, we consider the instructive example of encoding.
\begin{example}\label{ex:encoding}
    To define the serial words we followed an encode-then-concatenate rationale (parallel encoding followed by concatenation).
    What if we tried to encode directly the serial information word~$B$?
    This question is the encoding analogue of Question \ref{question}, and it has simple but instructive answer.
    To compare serial and parallel encoding, let~$\tilde{X}=convenc(B)$.
    Since the convolutional code has memory, in general it is not true that encoding and serial concatenation of information sequences commute.
    Recall that in the notation above we have~$X_i=[X]_{L_{i-1}+1}^{L_i}$.
    It is clear that~$[\tilde{X}]_1^{L_1}=[X]_1^{L_1}$, because both encoders start from the 0 state and they are fed the same input bits, that are the bits of~$B_10_6$.
    Because of the six zeros at the end of the information word~$B_10_6$, the trellis state in the encoding process of~$B$ at trellis section~$L_1$ is the zero state.
    Hence at trellis section~$L_1+1$ (that is, when the first bit of the second information word is the input) of the encoding of~$B$, the initial state of the transition 
        is the 0 state, just as it is in the beginning of the parallel encoding of~$B_20_6$.
    It follows that the outputs of these two steps must coincide, meaning~$[\tilde{X}]_{L_1+1}^{L_1+1}=[X_2]_1^{1}=[X]_{L_1+1}^{L_1+1}$.
    The encoding of~$B$ then proceeds following the same path in the trellis of the encoding of~$B_20_6$, and again the six zeros at the end bring the state 
        of the encoder back to 0 at time~$L_2=K_1+K_2$.
    It is then clear that our reasoning applies to all the~$N$ parallel words: at time~$L_i$, the state of the encoder that is encoding~$B$ will always be zero.
    Consequently, we have~$X=\tilde{X}$, meaning serial and parallel encoding are equivalent, and encoding and serial concatenation commute.
    As a final observation, notice that the six tail bits at the end of each information sequence~$B_i0_6$ play a key role in this reasoning; 
        this will be important also in the study of Question \ref{question}.
\end{example}
In the rest of the paper, we extend the comparison between serial and parallel framework to the computation of the APPs of the information and encoded bits, 
    and then to the DAB framework.

    \section{Convolutional decoding}\label{sec:decoding_problem}
\noindent As for the encoding, in principle there is no reason why the parallel and serial procedures should give the same result.
Anyway, some extra information is known to the decoder: the last six bits of each information word~$B_i0_6$ are always zero.
The position of the last six zero bits in an information word can be inferred from the length of each codeword.
As we have seen in \Cref{ex:encoding}, the presence of the tail bits makes the parallel and serial encoding of multiple information words equivalent;
    can it have a similar effect on the computation of the APPs of the information/coded bits?
Following~\cite{bahl1974optimal}, the encoder is a Markov chain~$S_t$ with known initial state~$S_0=0$; in our case, the chain has 64 states, as the CC of~\cite{etsi300} has memory 6 and~$2^6=64$.
The transitions of the Markov source are regulated by the information bits, and produce the encoded bits.
Since our code has rate 1/4, for each transition~$S_{t-1}\rightarrow S_t$ we have one input bit~$b_t$ and four output bits~$Z_t=z_t^1z_t^2z_t^3z_t^4$.
In this notation, by \Cref{ex:encoding} we have that~$X=X_1\ldots X_N=Z_1\ldots Z_{L_N}$ is the encoding of~$B=B_1\ldots B_N=b_1\ldots b_{L_N}$. 
To determine the APPs, we study the process~$S_t$ and its transitions, conditioned on the sequence~$Y$ (serial) or the sequences~$Y_i$ (parallel).
For every~$t=1,\ldots,L_N$, let~$i(t)$ be the index~$i=1,\ldots,N$ such that~$X_i$ is the parallel word encoding~$b_t$.
In other words,~$i(t)$ is the unique~$i=1,\ldots,N$ s.t.~$t\in(L_{i-1},L_i]$; moreover we denote by~$L^-(t),L^+(t)$ the extremal points of the \textit{closed} interval containing~$t$,
    meaning~$L^-(t)=L_{i(t)-1}+1,L^+(t)=L_{i(t)}$.
\begin{figure}[ht] 
    \centering
    \includegraphics[width=8cm,scale=0.5]{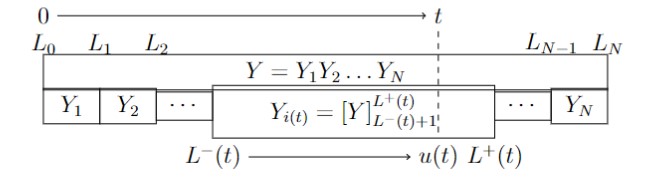}
    \caption{Illustration for the notation.}
    \label{notation_BCJR} 
    \vspace{-0.4cm}
\end{figure}
Hence for every~$t$,~$Y_{i(t)}=[Y]_{L^-(t)}^{L^+(t)}$ (recall~$Y$ has length~$L_N$ as a sequence of 4-bit-metrics), which will be our notation for the parallel decoders'inputs; 
    see \Cref{notation_BCJR} for a graphical description of the notation.
This notation is convenient to compare the parallel and serial decoding, as in the parallel APP decoding we guess based on~$Y_{i(t)}$,
    while in the serial APP decoding we guess based on the whole~$Y$.

\paragraph{Information bit distribution} in terms of information bits, we have the following problem: we want to compute the probability that each bit~$b_t$ is either 
    0 or 1, for all~$t=1,\ldots,L_N$, conditioned either on the corresponding received parallel word~$Y_{i(t)}$ or on the received full serial word~$Y$.
The following is the key result of this section.
\begin{theorem}\label{thm:APP_serial_parallel}
    Under the assumption that the positions of the tail bits are known to the receiver, the parallel and serial APPs of the information bits coincide.
    Formally, for all~$t=1,\ldots,L_N$ we have
    \begin{equation}\label{eq:thm_APP_statement}
        \Prob(b_t=0\mid Y)=\Prob(b_t=0\mid Y_{i(t)}).
    \end{equation}
\end{theorem}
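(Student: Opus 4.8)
The plan is to exploit the single structural fact established in Example~\ref{ex:encoding}: because of the six tail bits, the encoder's state is pinned to $0$ at every block boundary, i.e.\ $S_{L_j}=0$ deterministically for $j=0,1,\ldots,N$. I would work under the usual BCJR assumptions used in~\cite{bahl1974optimal} — the input bits are independent (with each tail bit fixed to $0$ with probability $1$) and the channel is memoryless — so that the joint law of the state path $\mathbf{S}=(S_0,\ldots,S_{L_N})$ and the received word $Y$ factorizes over trellis sections:
\[
\Prob(\mathbf{S}=\mathbf{s},\,Y=y)=\Prob(S_0=s_0)\prod_{t=1}^{L_N}\gamma_t(s_{t-1},s_t),
\]
where $\gamma_t(s_{t-1},s_t)$ bundles the transition probability at section~$t$ with the channel likelihood of the corresponding four metrics in $Y$. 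The key idea is that this product splits blockwise and that the deterministically pinned boundary states decouple the blocks.

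Concretely, I would first compute $\Prob(Y=y)$ by summing the display above over all state paths. Since every admissible path satisfies $s_{L_j}=0$, the sum over the interior states decouples across the $N$ blocks and the time-product splits accordingly, giving
\[
\Prob(Y=y)=\prod_{j=1}^N\Prob(Y_j=y_j),
\]
where each factor is exactly the probability the parallel decoder computes for block~$j$ — this holds because the parallel decoder of block~$j$ also runs from state $0$ at time $L_{j-1}$ to state $0$ at time $L_j$ on the same inputs $Y_j$. Repeating the identical summation after fixing the single bit $b_t$ (which constrains only the section-$t$ transition, lying inside block $i(t)$) yields the analogous factorization
\[
\Prob(b_t=0,\,Y=y)=\Prob(b_t=0,\,Y_{i(t)}=y_{i(t)})\prod_{j\neq i(t)}\Prob(Y_j=y_j).
\]

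Dividing the two displays, the factors $\Prob(Y_j=y_j)$ for $j\neq i(t)$ cancel, leaving precisely $\Prob(b_t=0\mid Y)=\Prob(b_t=0,Y_{i(t)})/\Prob(Y_{i(t)})=\Prob(b_t=0\mid Y_{i(t)})$, which is \eqref{eq:thm_APP_statement}. Conceptually the whole argument is the Markov property ``past and future are independent given the present state'' applied at the deterministically known boundary states $S_{L_{i(t)-1}}=S_{L_{i(t)}}=0$: conditioning on all of $Y$ gives no more information about $b_t$ than conditioning on $Y_{i(t)}$ alone.

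I expect the main obstacle to be making the blockwise decoupling of the path sum fully rigorous rather than merely plausible: one must argue carefully that fixing the boundary states to $0$ lets the interior-state sums of distinct blocks be carried out independently, and — crucially — that the resulting per-block factor is \emph{identical} to the quantity the independent parallel decoder computes (same start and end state $0$, same channel inputs $Y_j$). A secondary point to state cleanly is the memoryless-channel and independent-input assumptions that license the initial factorization over trellis sections, together with the harmless nondegeneracy assumption $\Prob(Y_j=y_j)\neq 0$ needed for the cancellation.
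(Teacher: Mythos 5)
Your proof is correct and rests on the same structural fact as the paper's---the tail bits pin the encoder state to $0$ at every block boundary---but it implements this fact via a genuinely different decomposition. The paper never factorizes $\Prob(Y)$: it writes $\Prob(b_t=0\mid Y)$ as a sum of the joint probabilities $\lambda_t(m)=\Prob(S_t=m,Y)$ over states $m$ compatible with input $0$, inserts the probability-one events $S_{L_{i(t)-1}}=0$ and $S_{L_{i(t)}}=0$ into the conditioning, and uses the Markov property of $S_t$ to obtain the state-level identity $\Prob(S_t=m\mid Y)=\Prob(S_t=m\mid Y_{i(t)})$; the ratio $\Prob(Y)/\Prob(Y_{i(t)})$ then cancels upon substituting back, without ever being evaluated. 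You instead prove the stronger intermediate statement that the received blocks are mutually independent, $\Prob(Y)=\prod_{j}\Prob(Y_j)$, by splitting the trellis path sum at the pinned boundary states, and conclude by cancelling the out-of-block factors. Your route makes explicit the hypotheses the paper leaves implicit in citing the BCJR source model (independent inputs with tail bits deterministically zero, memoryless channel), and the identical factorization argument immediately covers the paper's subsequent computation of the output-bit APPs $\Prob(z_t^{j}=0\mid Y)$, since fixing a section-$t$ transition is again a constraint internal to block $i(t)$. The paper's route is shorter, stays with the quantities $\lambda_t(m)$ that the BCJR recursion actually computes, and needs only conditional independence of $S_t$ from out-of-block observations given the boundary states rather than full block independence. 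The two points you flag as needing care---that each per-block factor coincides with what the isolated parallel decoder computes (same start and end state $0$, same branch metrics $\gamma_t$ on the same inputs $Y_j$), and that $\Prob(Y_j=y_j)\neq 0$ so the cancellation is licit---are indeed the crux, and both go through; note only that for soft (continuous-valued) channel outputs the probabilities of $Y$ should be read as densities, a caveat your argument shares with the paper's.
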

\begin{proof}
    The theorem follows from interpreting the knowledge about the tail bits in terms of states~$S_t$ of the encoder and then using the Markov property.
    By~\Cref{ex:encoding}, encoding~$B$ is equivalent to parallel encoding followed by concatenation.
    For every~$i=1,\ldots,N$, the tail bits at the end of each subsequence~$[B]_{L_{i-1}+1}^{L_i}=B_i0_6$ bring the encoder~$S_t$ back to the zero state. 
    The receiver knows the positions of the tail bits in~$B$, and since the~$i^{th}$ parallel word ends at time~$L_i$, 
        it follows that we have~$S_{L_i}=0$ for all~$i=0,\ldots,N$ (as~$S_{L_0}=S_0=0$ by definition).\\
    Following~\cite{bahl1974optimal}, let~$\mathscr{A}_{t}=\{m=(s_{t}^{(1)}\ldots s_{t}^{(6)})\;:\;s_{t}^{(1)}=0\}$ 
        be the set of states corresponding to a~$0$ input at time~$t$, then we have
    \begin{equation}\label{eq:serial_APP}
        \Prob(b^t=0\mid Y)=\sum_{m\in \mathscr{A}_t}\frac{\Prob(S_t=m,Y)}{\Prob(Y)}=\frac{1}{\Prob(Y)}\sum_{m\in \mathscr{A}_t}\lambda_t(m)
    \end{equation}
    and similarly
    \begin{equation}
        \Prob(b^t=0\mid Y_{i(t)})=\frac{1}{\Prob(Y_{i(t)})}\sum_{m\in \mathscr{A}_t}\lambda^P_t(m)\;.
    \end{equation}
    where ~$\lambda_t(m)=\Prob(S_t=m,Y)$ and ~$\lambda^P_t(m)=\Prob(S_t=m,Y_{i(t)})$ are the parallel and serial joint probabilities respectively.
    Expliciting the receiver's knowledge of the states~$S_{L_i}=0$ for~$i=0,\ldots,N$ (and especially for~$i=i(t)-1$,~$i=i(t)$) we have 
       ~$\Prob(S_t=m\mid Y)=\Prob(S_t=m\mid Y,S_{L^-(t)}=0,S_{L^+(t)}=0)$.
    Since~$S_t$ is a Markov process, its state at time~$t$ is independent from any observation prior to (after) the known state~$S_{L^-(t)}=0$ ($S_{L^+(t)}=0$).
    It follows that
    \begin{align}
        \Prob(&S_t=m\mid Y,S_{L^-(t)}=0,S_{L^+(t)}=0)\nonumber\\
        &=\Prob(S_t=m\mid [Y]_{L^-(t)+1}^{L_N},S_{L^-(t)}=0,S_{L^+(t)}=0)\nonumber\\
        &=\Prob(S_t=m\mid [Y]_{L^-(t)+1}^{L^+(t)},S_{L^-(t)}=0,S_{L^+(t)}=0)\;,
    \end{align}
    where by definition~$\Prob(S_t=m\mid Y_{i(t)},S_{L^-(t)}=0,S_{L^+(t)}=0)=\Prob(S_t=m\mid Y_{i(t)})$.
    Then we have~$\Prob(S_t=m\mid Y)=\Prob(S_t=m\mid Y_{i(t)})$, and hence
    \begin{align}
        \lambda_t(m)&=\Prob(Y)\Prob(S_t=m\mid Y)=\Prob(Y)\Prob(S_t=m\mid Y_{i(t)})\nonumber\\
        &=\frac{\Prob(Y)}{\Prob(Y_i)}\lambda^P_t(m)\;.
    \end{align}
    Replacing this in \Cref{eq:serial_APP} and simplifying, we get 
    \begin{equation}
        \Prob(b^t=0\mid Y)=\frac{1}{\Prob(Y)}\sum_{m\in \mathscr{A}_t}\frac{\Prob(Y)}{\Prob(Y_i)}\lambda^P_t(m)=\Prob(b^t=0\mid Y_{i(t)})
    \end{equation}
    which is precisely what we wanted to prove.
\end{proof}

\begin{remark}
    Despite being framed for the specific DAB convolutional code, this result is actually independent from the choice of the code and/or any other operations 
        applied to the codewords, such as puncturing or interleaving. 
    As long as the serial received word is formed by concatenating the parallel received words, and the termination conditions are met in the encoding process,
    \Cref{thm:APP_serial_parallel} holds.
    For example, in DAB system, the encoded words~$X_i$ are often punctured to obtain higher rates and unequal error protection; 
        see~\cite[Section 11]{etsi300} for many examples of different protection profiles.
    If we want to include puncturing in our computations, we simply denote by~$\bar{X}_i$ the punctured parallel codewords and by~$\bar{X}$ again their concatenation.
    Notice that~$\bar{X}=P(convenc(B))$ for a suitable puncturing pattern~$P$ (the "concatenation" of the puncturing patterns used for the~$X_i$).
    Denoting as~$\bar{Y}$ the sequence received over the AWGN channel, after the required de-puncturing we get again a sequence~$Y$ 
        of length~$L_N$ (in 4-symbols) that can be fed to a decoder.
    The case of interleaving is even more simple, as interleaving is just a bijection that scrambles the sequences but does not alter the information contained in them.
    If a code with different constraint length is used, the number of tail bits at the end of each information word needs to be adapted accordingly.
\end{remark}

The BCJR algorithm outputs a guess for a bit~$b_t$ based on its distribution conditioned on the observed sequence.
A simple consequence of the theorem we just proved is that the parallel and serial guess for~$b_t$ coincide.

\begin{corollary}
Assume that the positions of the tail bits are known, and the BCJR algorithm is used as a decoder.
Let~$\hat{b}_t$ and~$\hat{b}^{i(t)}_t$ be the serial and parallel outputs of the decoders corresponding to bit~$b_t$ in the information sequence, then we have
       ~$\hat{b}_t=\hat{b}^{i(t)}_t$.
\end{corollary}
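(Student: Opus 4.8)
The plan is to observe that, for a binary symbol, the BCJR output is a deterministic function of the single a-posteriori probability $\Prob(b_t=0\mid\cdot)$, so that the corollary reduces immediately to \Cref{thm:APP_serial_parallel}. Concretely, the BCJR decoder returns the maximum a-posteriori estimate, i.e. the serial decision is
\begin{equation*}
    \hat{b}_t=\begin{cases}0 & \text{if }\Prob(b_t=0\mid Y)\geq \Prob(b_t=1\mid Y),\\ 1 & \text{otherwise,}\end{cases}
\end{equation*}
and the parallel decision $\hat{b}^{i(t)}_t$ is given by the same rule with $Y$ replaced by $Y_{i(t)}$. The first step I would take is simply to record this decision rule and to emphasise that it reads off the conditional distribution of $b_t$ and nothing else.

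Next I would invoke \Cref{thm:APP_serial_parallel}, which gives $\Prob(b_t=0\mid Y)=\Prob(b_t=0\mid Y_{i(t)})$. Since $b_t$ takes only the values $0$ and $1$, the two conditional distributions are determined by these numbers alone, and the complementary probabilities therefore also agree:
\begin{equation*}
    \Prob(b_t=1\mid Y)=1-\Prob(b_t=0\mid Y)=1-\Prob(b_t=0\mid Y_{i(t)})=\Prob(b_t=1\mid Y_{i(t)}).
\end{equation*}
Thus the full conditional law of $b_t$ is identical whether we condition on the serial word $Y$ or on the parallel word $Y_{i(t)}$.

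Finally, because the decision rule above is a fixed deterministic function of this conditional law, feeding it two identical distributions must yield identical outputs, so $\hat{b}_t=\hat{b}^{i(t)}_t$ for every $t=1,\ldots,L_N$. There is essentially no obstacle here: the only point worth spelling out is the passage from the single identity supplied by the theorem to the equality of the entire binary distribution, which is immediate by normalisation, together with the (standard) observation that the BCJR hard decision depends on the observation only through this distribution. The same argument applies verbatim to any decision statistic that is a function of the APP — in particular to the log-likelihood ratios used later — which is why the LLRs, and hence the BER, are likewise preserved under serial decoding.
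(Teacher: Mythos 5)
Your proof is correct and follows essentially the same route as the paper: both reduce the claim to \Cref{thm:APP_serial_parallel} by noting that the BCJR hard decision is a deterministic threshold rule on $\Prob(b_t=0\mid\cdot)$ (your comparison of the two posteriors is equivalent to the paper's $\geq 0.5$ test for a binary variable). The extra step spelling out that the complementary probabilities also agree is a harmless elaboration, not a different argument.
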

\begin{proof}
    Using the BCJR algorithm~\cite{bahl1974optimal}, the parallel (serial) guess of~$b_t$ is~$\hat{b}^{i(t)}_t=0$ ($\hat{b}_t=0$) if~$\Prob(b_t=0\mid Y_{i(t)})\geq0.5$
    (if~$\Prob(b_t=0\mid Y)\geq0.5$), and~$\hat{b}^{i(t)}_t=1$ ($\hat{b}_t=1$) otherwise.
    Since by \Cref{thm:APP_serial_parallel} we have~$\Prob(b_t=0\mid Y)=\Prob(b_t=0\mid Y_{i(t)})$, the thesis follows.
\end{proof}

\paragraph{Encoder output distribution} when decoding iteratively, we are interested also in the APPs of the 4 output bits~$Z_t=z_t^1z_t^2z_t^3z_t^4$ of the encoder at time~$t$,
    as they constitute the extrinsic information used as an a priori input for the next iteration~\cite{hagenauer1996iterative}.
Since the output digits depend directly from the encoder process~$S_t$, the reasoning we made about the APP distribution of the input bit translates also in this setting. 
The considerations based on the Markovianity of~$S_t$ and the known zero states still hold. 
For~$j=1,\ldots,4$ we let~$\mathscr{B}_t^{j}$ be the set of transitions~$m'\rightarrow m$ s.t. the~$j^{th}$ bit of the corresponding output sequence~$Z_t$ is 0; then 
\begin{align}
    \Prob(z_t^{j}=0\mid Y)&=\sum_{m\in \mathscr{B}_t^{(j)}}\frac{\Prob(S_{t-1}=m',S_t=m,Y)}{\Prob(Y)}\nonumber\\
    &=\frac{1}{\Prob(Y)}\sum_{m\in \mathscr{B}_t^{(j)}}\sigma_t(m)
\end{align}
and
\begin{align}
    \Prob(z_t^{j}=0\mid Y_{i(t)})&=\sum_{m\in \mathscr{B}_t^{(j)}}\frac{\Prob(S_{t-1}=m',S_t=m,Y_{i(t)})}{\Prob(Y_{i(t)})}\nonumber\\
    &=\frac{1}{\Prob(Y_{i(t)})}\sum_{m\in \mathscr{B}_t^{(j)}}\sigma^P_t(m)\;.
\end{align}
where~$\sigma_t(m)$ and~$\sigma^P_t(m)$ are again joint probabilities.
By Markovianity of~$S_t$, and using the (implicit) knowledge of the states~$S_{L_i}=0$ for~$i=0,\ldots,N$, as in the proof of \Cref{thm:APP_serial_parallel} we have
\begin{align}
    \Prob&(S_{t-1}=m',S_t=m\mid Y)=\nonumber\\
    &=\Prob(S_{t-1}=m',S_t=m\mid Y,S_{L_{i(t)-1}}=0,S_{L_i}=0)\nonumber\\
    &=\Prob(S_{t-1}=m',S_t=m\mid Y_{i(t)},S_{L_{i(t)-1}}=0,S_{L_i}=0)\nonumber\\
    &=\Prob(S_{t-1}=m',S_t=m\mid Y_{i(t)})
\end{align}
and hence it follows that
\begin{align}
    \sigma_t(m',m)&=\Prob(Y)\Prob(S_{t-1}=m',S_t=m\mid Y)\nonumber\\
    &=\Prob(Y)\Prob(S_{t-1}=m',S_t=m\mid Y_{i(t)})\nonumber\\
    &=\frac{\Prob(Y)}{\Prob(Y_i)}\sigma^P_t(m',m)\;.
\end{align}
Replacing this in the equation for~$\Prob(z_t^{j}=0\mid Y)$ we get 
\begin{align}
    \Prob(z_t^{j}=0\mid Y)&=\frac{1}{\Prob(Y)}\sum_{m\in \mathscr{B}_t^{(j)}}\frac{\Prob(Y)}{\Prob(Y_i)}\sigma^P_t(m)\nonumber\\
    &=\Prob(z_t^{j}=0\mid Y_{i(t)})\;.
\end{align}
It follows then that the parallel and serial guess for~$z_t^{j}$ coincide: the parallel guess is
\begin{equation*}
    \hat{z}_{t}^{j,i(t)}=\begin{cases}
        0&\textnormal{ if }\Prob(z_t^{j}=0\mid Y_{i(t)})\geq0.5\\
        1&\textnormal{ otherwise}
    \end{cases}
\end{equation*}
and the serial guess~$\hat{z}_t^{j}$ is obtained by replacing~$Y_{i(t)}$ with~$Y$.

\paragraph{Viterbi decoding} 
for efficiency reasons, many DAB receivers use the Viterbi algorithm.
The knowledge about the position of the tail bits can be used to get a serial Viterbi decoding for the parallel codewords.
For~$i=1,\ldots,N$ we let~$V_i=vitdec(Y_i)$ be the output of the parallel Viterbi decoders,~$\pi_i$ is the trellis paths corresponding to~$V_i$, by~$BM_{j,k}^i$ the~$k^{th}$ branch metric on the~$j^{th}$ path, by~$PM^i_{j}$ 
    the path metric of the~$j^{th}$ path.
In the concatenated setting,~$V=vitdec(Y)$,~$\pi$ is the trellis path corresponding to~$V$,~$BM_{j,k}$ is the~$k^{th}$ branch metric of the~$j^{th}$ path,~$PM_j$ is the~$j^{th}$ path metric.
To simplify the tractation, we assume that~$\pi_i$ and~$\pi$ are the are unique paths with the highest metrics.
The information about the last six bits of each information word~$B_i0_6$ can be used during the decoding as follows.
Consider~$V_1=vitdec(Y_1)$ and the surviving paths in the Viterbi algorithm at time~$K_1-6$; we know that the output of the decoder 
    in the subsequent 6 sections should be zero, as the section correspond to tail bits.
For this reason we can twist the branch metrics of the last 6 trellis sections to be
\begin{equation*}
    BM_{j,k}^1=\begin{cases}
        0&\textnormal{if the output corresponds to a decoded 0}\\
        -\infty&\textnormal{otherwise}
    \end{cases}
\end{equation*}
(as~$0=\log(1)$,~$-\infty=\log(0)$) for all paths~$j$ and all branch indexes~$k=K_1-5,\ldots,K_1$.
This has the desired effect in terms of output bits: every path corresponding to a decoded 1 in the last 6 digits will have path metric of~$-\infty$, 
    meaning it will not survive when compared to any path with zeros in the desired positions.
We can also imagine this modification as a funnel in the trellis: suppose~$j$ is a surviving path at time~$K_1-6$, then~$j$ has potentially two extensions to a surviving 
    path at time~$K_1-5$, corresponding to a 0 and a 1 in the decoded sequence.
With the modified branch metrics, we can immediately discard the extension corresponding to an output 1, since it will have path metric~$-\infty$ in the end.
It follows that surviving paths with finite metric at time~$K_1-5,K_1-4,\ldots,K_1$ will be funneled towards the 0 state, which they will all reach at time~$K_1$.
Essentially, modifying the last six branch metrics ensures that~$\pi_1$ is the surviving path passing through the 0 state at time~$K_1$ in the standard Viterbi algorithm without modification.
We apply the modification to all the last 6 branches of the parallel decoding processes: for all~$i=1,\ldots,N$, for all~$k=K_{i}-5,\ldots,K_i$, and for all paths~$j$ we let
$BM_{j,k}^i=0$ if the corresponding output is a zero,~$-\infty$ otherwise.
Notice that this modification implies that in each parallel decoder, the surviving path~$\pi_i$ starts and ends in the zero state.
Hence we can concatenate all of these paths to get a path~$\pi_1\pi_2\ldots\pi_N$ starting and ending in the zero state, 
    and passing through the zero state at all times~$L_i$,~$i=1,\ldots,N$.
The corresponding branch metric modification in the decoding of~$Y$ has to be done on all branches corresponding to trellis sections~$k=L_i-5,L_i-4,\ldots,L_i$, 
    for all~$i=1,\ldots,N$, for every path~$j$:
\begin{equation*}
    BM_{j,k}=\begin{cases}
        0&\textnormal{if the output corresponds to a decoded 0}\\
        -\infty&\textnormal{otherwise}
    \end{cases}\;.
\end{equation*}
As in the parallel decoding, this modification has the desired effect: all paths corresponding to output sequences~$V$ not having the required six-zero 
    blocks in the right positions will have path metric~$-\infty$, and hence will surely be discarded when they meet any path with finite metric.
The corresponding funnel view is that we have~$N$ funnels at branches~$L_i-5,\ldots,L_i$ for all~$i=1,\ldots,N$ in the trellis that is being explored in the serial 
    decoding procedure.
Consequently, the path~$\pi$ corresponding to the output will be a path passing through state~$0$ at times~$L_i$,~$i=1,\ldots,N$.
But then we have that~$\pi$ is the composition of~$N$ paths~$\tilde\pi_i$ starting from state~$0$ at trellis section~$L_{i-1}$ and returning to is at section~$L_i$.
In fact, when we compare all the paths passing through the 0 state at time~$L_1=K_1$, the Viterbi rule leaves us with exactly one path going from the zero 
    state at time 0 to the zero state at time~$L_1$.
Since the inputs of the decoder up to this point are exactly~$[Y]_1^{L_1}=Y_1$, it follows that the surviving path up to this point must be~$\tilde\pi_1=\pi_1$.
This reasoning extends to the part of the trellis between time~$L_1$ and time~$L_2$: the path maximizing the path metric in this part is~$\tilde\pi_2=\pi_2$, 
and the only survivor at time~$L_2$ for the decoding of~$Y$ will be the concatenation~$\pi_1\pi_2$.
This argument applies to the surviving paths at time~$L_i$ for all~$i=1,\ldots,N$.
Hence the surviving path~$\pi$ corresponding to the output~$V=vitdec(Y)$ is actually the concatenation of the surviving paths~$\pi_i$ corresponding to the outputs~$V_i=vitdec(Y_i)$.

\section{Simulation results}\label{sec:simulation}
\noindent We support our findings with two Matlab simulations.
\paragraph{First simulation} in this simulation, we model as closely as possible the coding problem of \Cref{sec:decoding_problem}, disregarding the specifications of DAB: 
    we use QPSK modulation over AWGN channel, and no interleaving.
This simulation features 12 parallel subchannels, each of which encodes an information word (random bit sequence) of fixed length and is puncutred according to~\cite[Table 29, page 131]{etsi300}. 
For \Cref{fig:sim1} we used information words of length 4808 bits (including the tail bits), and the 12 puncturing indexes for the subchannels are (20,15,21,24,9,10,8,17,20,21,24,23)
By giving as input to the APP decoder the appropriate a priori LLR vectors that reflect the knowledge of the states~$S_{L_i}$ for~$i=1,\ldots,N$ 
    (see \Cref{thm:APP_serial_parallel}), we find the two BER curves of \Cref{fig:sim1}: the serial and parallel BER are exactly superposed.
This is expected, given that the APPs of the bits determine the outputs of the decoder.
\begin{figure}[ht]
    \centering
    \includegraphics[width=7cm,scale=0.45]{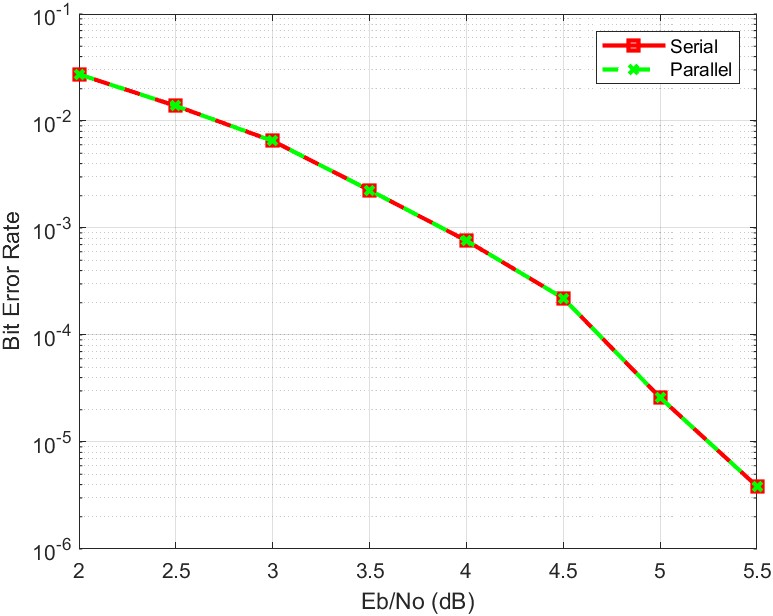}
    \vspace{-0.3cm}
    \caption{First simulation.}
    \label{fig:sim1}
    \vspace{-0.4cm}
\end{figure}
\paragraph{Second simulation} this simulation shows that our results still hold in a DAB-like framework, 
    with DE-QPSK modulation with OFDM, time/frequency interleaving and other features of the standard.
The information about the positions of the tail bits is given to the APP decoder in the form of \textit{a priori} information; in a real-life scenario, 
    it is transmitted in the Fast Information Channel as Multiplex Configuration Information.
We set up a simulation where each CIF is divided into 4 subchannels, each encoded using protection level 3-A with a data rate of 288 kbit/s.
The parameters are specified according to Table 7, p.51, and Tables 33 and 34, p.176 of~\cite{etsi300}; each subchannel has rate 1/2.
We used Transmission Mode I of~\cite{etsi300} over AWGN channel, and coherent demodulation~\cite{van2012two} at the receiver.
In \Cref{fig:sim2} we show the BER curves for 4 parallel subchannels, their aggregate BER and the BER of their serial concatenation; as expected, the latter two coincide.
\begin{figure}[ht]
    \centering
    \includegraphics[width=7cm,scale=0.45]{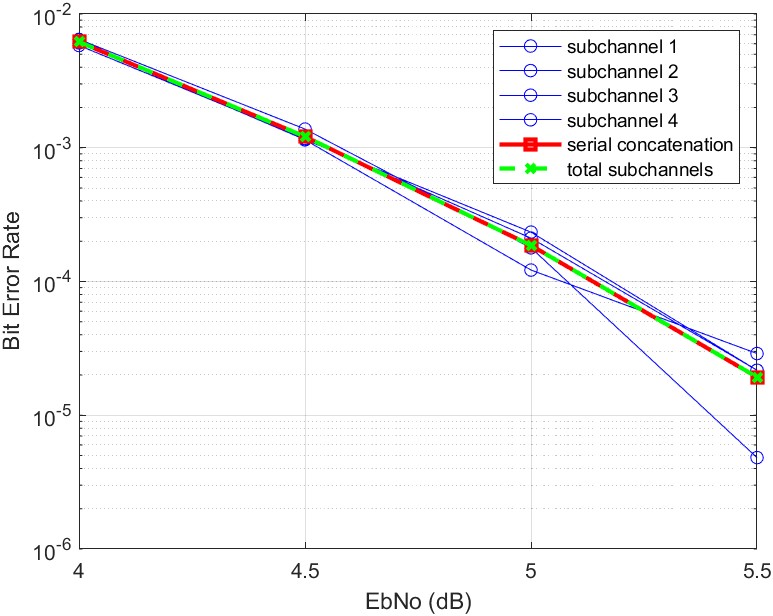}
    \vspace{-0.3cm}
    \caption{Second simulation.}
    \label{fig:sim2}
    \vspace{-0.4cm}
\end{figure}

\section{Conclusions and future work}\label{sec:conclusions}
\noindent In this work we proved that the concatenation of multiple received codewords does not affect the APPs of the information bits in a convolutional coding system,
      as long as the necessary termination conditions are met in the encoding process and the relative information is available to the receiver.
  Supported by simulation results, we showed the applicability of this result to realistic DAB frameworks; this opens the doors for future research on the performance
      improvements that can be achieved by considering the CIF as one codeword in the iterative decoders for DAB systems.
\vspace{-0.2cm}
\bibliographystyle{IEEEtran}
\bibliography{refs}

\end{document}